\newtheorem{definition}{Definition}
\newtheorem{theorem}{Theorem}
\newtheorem{lemma}{Lemma}
\newtheorem{claim}{Claim}
\newtheorem{informal}{Informal Theorem}
\newcommand{\E}{\mathds{E}}
\renewcommand{\Pr}{\mathds{P}}
\newcommand{\ind}{\mathds{1}}
\newcommand{\sigs}{\mathbf s}
\newcommand{\vals}{\mathbf v}
\newcommand{\x}{\mathbf x}
\newcommand{\p}{\mathbf p}
\newcommand{\reals}{\mathds R}
\newcommand{\st}{\text{ s.t. }}
\newcommand{\val}[3]{v_{#1}(#2, #3)}
\newcommand{\opt}[2]{{O_{#1}^{#2}}}
\date{}
\title{Online Combinatorial Allocation with Interdependent Values}
\newcommand*{\email}[1]{\href{mailto:#1}{\nolinkurl{#1}} } 
\author{Michal Feldman}
\affil{
    Tel Aviv University, Israel;
    \email{mfeldman@tauex.ac.il}
    \orcidlink{0000-0002-2915-8405}
}
\author{Simon Mauras}
\affil{
    INRIA, FairPlay joint team, France;
    \email{simon.mauras@inria.fr}
    \orcidlink{0000-0003-4080-3118}
}
\author{Divyarthi Mohan}
\affil{
    Boston University, USA;
    \email{dmohan@bu.edu}
    \orcidlink{0000-0002-8671-5714}
}
\author{Rebecca Reiffenhäuser}
\affil{
    University of Amsterdam, Netherlands;
    \email{r.e.m.reiffenhauser@uva.nl}
    \orcidlink{0000-0002-0959-2589}
}
\begin{document}

\begin{titlepage}

\maketitle

\begin{abstract}
We study online combinatorial allocation problems in the secretary setting, under interdependent values.
In the interdependent model, introduced by Milgrom and Weber (1982), each agent possesses a private signal that captures her information about an item for sale, and the value of every agent depends on the signals held by all agents.
Mauras, Mohan, and Reiffenhäuser (2024) were the first to study interdependent values in online settings, providing constant-approximation guarantees for secretary settings, where agents arrive online along with their signals and values, and the goal is to select the agent with the highest value.

In this work, we extend this framework to {\em combinatorial} secretary problems, where agents have interdependent valuations over {\em bundles} of items, introducing additional challenges due to both combinatorial structure and interdependence. 
We provide  $2e$-competitive algorithms for a broad class of valuation functions, including submodular and XOS functions, matching the approximation guarantees in the single-choice secretary setting. 
Furthermore, our results cover the same range of valuation classes for which constant-factor algorithms exist in classical (non-interdependent) secretary settings, while incurring only an additional factor of $2$ due to interdependence. 
Finally, we extend our study to strategic settings, and provide a $4e$-competitive truthful mechanism for online bipartite matching with interdependent valuations, again meeting  the frontier of what is known, even without interdependence.
\end{abstract}

\bigbreak
\renewcommand{\abstractname}{Acknowledgements}
\begin{abstract}
This work was supported in part by the European Research Council (ERC) under the European Union's Horizon 2020 research and innovation program (grant agreement No. 866132), by an Amazon Research Award, by the Israel Science Foundation Breakthrough Program (grant No. 2600/24), by a grant from the TAU Center for AI and Data Science (TAD), and by the National Science Foundation (NSF) CAREER Award (CCF-2441071).
Simon Mauras benefited from the support of the FMJH Program PGMO (P-2024-0034).
\end{abstract}

\thispagestyle{empty}
\end{titlepage}

\section{Introduction}

The secretary problem lies at the heart of optimal stopping theory, serving as a paradigmatic problem in online algorithms \citep{Gardner60,Dynkin63,ferguson1989}.
In this setting, elements arrive sequentially in a random order, with their value revealed upon arrival. Upon observing a value, an online algorithm must make an immediate and irrevocable decision: accept the current value and end the game, or continue to the next element, losing the current one forever. The goal is to compete against the maximum value.
This framework captures a wide range of applications, such as hiring, ad auctions, and online resource allocation, where decisions must be made immediately and irrevocably. A celebrated result in this area establishes the existence of an $e$-competitive algorithm for this problem, and this is tight \cite{Dynkin63}.

While the classic secretary problem is well understood, its basic formulation is too simplistic to capture many real-world complexities. Three major challenges are the following:
\begin{enumerate}
    \item {\bf Combinatorial settings}: Many real-life applications involve structured decision-making beyond single-choice problems, requiring solutions for complex allocation problems.
    \item {\bf Interdependent values}: The standard model assumes that arriving values are independent, whereas in many settings, values are correlated across agents.
    \item {\bf Truthful mechanisms}: When dealing with strategic agents, there is a need for mechanisms that incentivize truthful reporting.
\end{enumerate}

To address the first challenge, prior work has generalized the secretary problem to combinatorial assignment settings \citep{KorulaP09, GoelMehta08,Kleinberg05,KesselheimRTV13}.
These models include weighted bipartite matching and combinatorial allocations, where items are offline, and agents arrive online, revealing their combinatorial valuation function upon arrival. A major breakthrough by \citet{KesselheimRTV13} established an $e$-competitive online algorithm even for combinatorial allocation problems with XOS valuations (a strict superset of submodular valuations). For the special case of (weighted) bipartite matching, \citet{Reiffenhauser19} extended this to a truthful $e$-competitive mechanism, addressing challenges (1) and (3) simultaneously.

The second challenge, that of interdependent values, has only recently been addressed. The interdependent value model was introduced in the seminal work of \citet{MilgromWeber82}, where each agent has a private signal about the item, and their valuation depends on the signals of all agents. Interestingly, \citet{InterdepOptStopping} devised a $2e$-competitive algorithm for the single-choice secretary problem with interdependent values, proving that interdependence incurs at most a factor $2$ loss in the competitive ratio. Furthermore, they established a $4e$-competitive truthful mechanism for this setting, addressing challenges (2) and (3) simultaneously.

\paragraph{Our work: Combining all Three Challenges}
In this work, we provide the first study of the secretary problem that addresses all three challenges simultaneously. We explore combinatorial allocations with interdependent values, both in strategic and non-strategic settings. Notably, even the non-strategic case, that of combinatorial settings with interdependent values, has not been previously studied.

\paragraph{Our Model.}
We consider the problem of allocating a set of $m$ items to a set of $n$ agents. Every agent has a combinatorial valuation function, specifying her value for any bundle of items. 
These valuations are \emph{interdependent} in the following sense:
alongside her valuation function, each agent also has a \emph{signal}, given by a single nonnegative number, which captures her personal information or belief about the quality of the items. Each agent's valuation function depends, besides the allocated bundle of items, also on the signal profile of all agents.

A valuation function thus assigns a real value to every tuple of a bundle of items and a signal profile.
It will be convenient to also think of the valuation function as a set function over items, for a fixed signal profile. Similarly, one can think of the valuation function over signal profiles, for a fixed bundle of items.
For valuations over bundles, we consider XOS valuations (a superset of submodular valuations), and unit-demand valuations (inducing a weighted bipartite matching problem). For valuations over signals, we consider XOS over signals and subadditive over signals.

The agents arrive online, in a random order. Upon the arrival of an agent $i$, her signal and valuation function are revealed, and the online algorithm should make an immediate and irrevocable decision about which items (if any) to assign to her. The goal is to be competitive against the maximum social welfare (i.e., the sum of agent values for their bundles), given the entire information. 

Our two key questions are the following:

\vspace{-0.05in}
 \paragraph{Question 1:} 
What is the competitive ratio that can be obtained for combinatorial secretary problems with interdependent valuations?

\paragraph{Question 2:} 
What is the competitive ratio that can be achieved by truthful mechanisms for combinatorial secretary problems with interdependent valuations?

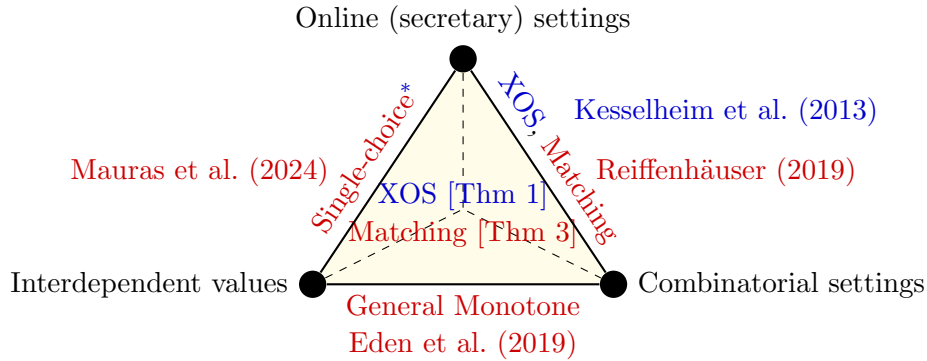
\begin{figure}
    \begin{center}
\begin{tikzpicture}

      \fill[yellow!10] (-2,0) -- (2,0) -- (0,3) -- cycle;

    \node[fill, circle, minimum size = 5pt] (A) at (-2,0) {};
  \node[fill, circle, minimum size = 5pt] (B) at (2,0) {};
  \node[fill, circle, minimum size = 5pt] (C) at (0,3) {};
  \node (O) at (0,1.5) {};

  \draw[dashed] (A) -- (0, 1) -- (B);
  \draw[dashed] (C) -- (0, 1);

    \node[anchor=east] at (A.west) {Interdependent values};
    \node[anchor=west] at (B.east) {Combinatorial settings};
    \node[anchor=south] at (C.north) {Online (secretary) settings};
    \node[anchor=north] at (O) 
    {{\color{blue!80!black}XOS [Thm 1]}};
    \node[anchor=north] at (0,1) 
    {{\color{red!80!black}Matching [Thm 3]}};

    \node at (3.5,2.3) {\textcolor{blue!80!black}{\cite{KesselheimRTV13}}};
    \node at (3.5,1.5) {\textcolor{red!80!black}{\cite{Reiffenhauser19}}};
    \node at (-3.5,1.5) {\textcolor{red!80!black}{\cite{InterdepOptStopping}}};
    \node at (0,-.8) {\textcolor{red!80!black}{\cite{EdenFFGK19}}};
    \draw[thick] (A) -- (B) 
    node[midway, below] {\textcolor{red!80!black}{General Monotone}};
    \draw[thick] (B) -- (C) node[midway, sloped, above] {\textcolor{blue!80!black}{XOS}, \textcolor{red!80!black}{Matching}};
    \draw[thick] (C) -- (A) node[midway, sloped, above] {\textcolor{red!80!black}{Single-choice}{\color{blue!80!black}$^*$}};
\end{tikzpicture}
\end{center}

    \caption{Our work lies at the intersection of three fundamental lines of work: Online (secretary) algorithms, combinatorial auctions, and interdependent values. Works in \textcolor{red!80!black}{red} provide truthful mechanisms, while works in \textcolor{blue!80!black}{blue} provide algorithms for non-strategic settings. \citet{InterdepOptStopping} consider both strategic and non-strategic settings.
    }
    \label{fig:intersection-triangle}
\end{figure}

\subsection{Our Results}

In this paper, we provide answers to both of our questions above, establishing the first constant-competitive algorithms and truthful mechanisms for combinatorial secretary problems with interdependent values. 
Our results are summarized in~\Cref{fig:table-results}, where they are compared to previous work. 
As depicted in Figure~\ref{fig:intersection-triangle}, this work lies at the intersection of three fundamental lines of work: online (secretary) algorithms, combinatorial auctions, and interdependent values.

\begin{table}[h]
    \begin{center} 
    \begin{tabular}{|c|c|c|c|}
        \hline
        Item Valuations & Interdependence & Algorithm & Truthful Mechanism\\ 
        \hline
        \multirow{3}{*}{Single-choice} & None & $e$~\citep{Dynkin63} & $e$~\citep{Dynkin63}\\
        & Subadditive & $2e$~\citep{InterdepOptStopping}  & $4e$~\citep{InterdepOptStopping}  \\  
        & XOS & $4$~\citep{InterdepOptStopping}  & '' \\
         \hline
        \multirow{3}{*}{Bipartite matching}& None & $e$~\citep{KesselheimRTV13} & $e$~\citep{Reiffenhauser19} \\  
         & Subadditive & $2e$~(\Cref{thm:secretary-algo-2e}) & $4e$~(\Cref{thm:secretary-truthful}) \\  
                  & XOS  & $4$~(\Cref{thm:secretary-algo-XOSoS}) & '' \\
        \hline
        \multirow{3}{*}{XOS} & None & $e$~\citep{InterdepOptStopping} & \multirow{3}{*}{\em Unknown} \\  
         & Subadditive & $2e$~(\Cref{thm:secretary-algo-2e}) & \\
         & XOS & $4$~(\Cref{thm:secretary-algo-XOSoS}) & \\
        \hline
    \end{tabular}
\end{center}
    \caption{Competitive ratios of algorithms (non-strategic settings) and truthful mechanisms for combinatorial allocations in the secretary model. Rows represent valuation functions over items, further divided based on valuations over signals.}
    \label{fig:table-results}
\end{table}

For non-strategic settings we show the following:

\begin{informal}
There exists a $2e$-competitive secretary algorithm for combinatorial settings with interdependent valuations that are XOS over items and subadditive over signals.

{Under the stricter condition of XOS over signals (while remaining XOS over items), we improve this to a $4$-competitive algorithm.}
\end{informal}

Notably, we don't impose any additional assumptions on valuations to obtain our results. 
In particular, 
subadditivity over signals is used even for results in the non-combinatorial (single-choice) secretary setting~\citep{InterdepOptStopping}. Furthermore, XOS (over items) is the most general class of valuations for which there are known constant-competitive algorithms in secretary settings, even without interdependence.

For strategic settings we show the following:

\begin{informal}
    There exists a $4e$-competitive truthful mechanism for secretary weighted bipartite matching with interdependent valuations that are separable and subadditive over signals.
\end{informal}

Once again, both assumptions (separability and subadditivity over signals) are made even in (offline) truthful mechanisms for matching with  interdependent valuations~\citep{EdenFFGK19}; that is, no additional assumptions are imposed to obtain our results for the combined setting.

Moreover, in the classic secretary setting, the current frontier for truthful mechanisms in combinatorial problems is edge-weighted bipartite matching, where an $e$-competitive truthful mechanism is known to exist \citep{Reiffenhauser19}. Extending this result to broader combinatorial settings remains a challenging open problem, even without interdependence.

Our results show that secretary assignment problems with interdependent values extend to the same level of generality as their traditional, non-interdependent counterparts, with small loss in the competitive ratio.
This is again demonstrated nicely in Table~\ref{fig:table-results}.
In fact, also for other classes of secretary algorithms, there exist such counterparts that accommodate interdependent values, as outlined in \Cref{sec:framework}.

Furthermore, while the competitive ratios we obtain fall slightly short of those in the non-interdependent setting, they exactly match those established for the single-choice case (with interdependent values) \citep{InterdepOptStopping}. This mirrors the non-interdependent landscape, where competitive ratios for single-choice settings also carry over to more complex combinatorial settings.

Our work is particularly timely, as central offline results under interdependent values are still emerging \citep{EdenFFGK19, EdenGZ22, LuSZ22, EdenFGMM23, EdenFMM24}, and online versions are only beginning to be explored \citep{InterdepOptStopping}. 
For the first time, we demonstrate that constant-competitive online algorithms can be designed in the presence of interdependent values, even in complex combinatorial settings with strategic agents.

\paragraph{A Natural Approach}
Results on secretary algorithms, closely resembling those for the classic secretary problem, all hinge on utilizing independence in the following ways:
First, all values (or combinatorial valuations) are assumed to be fixed in advance, meaning they are not random variables (and thus independent).
Second, which value/valuation function arrives in each time step is drawn uniformly at random from the set of remaining (fixed) ones.
As a result, at any step $t$, the set of arrived values forms a uniformly random subset of size $t$.
This randomness enables the algorithm to treat the initial part of the sequence as a {\em random sample} of the full valuation set, a key property used in existing solutions.

Leveraging this same fact, we observe that one can \emph{separate} the issue of interdependence from the others.
In addition, any pair of the three challenges above (secretary model, combinatorial problem, interdependent values) have been addressed together by previous research. 
One work, most relevant to ours, integrates interdependent values with the secretary problem, but is limited to allocating a single item. 
A second line of research addresses the combinatorial secretary problem, but without interdependence. 
The third approach focuses on designing truthful mechanisms under the secretary model, and is based on analyzing independent probabilities of allocation, and VCG payments, requiring precise knowledge of each item's value --- information that is unavailable at the time of allocation.

A natural and intuitive  approach is therefore to pursue the following strategy. First, we separate the issue of interdependence by using a sufficiently-sized sample. We then apply techniques from all of the above lines of research to obtain constant-competitive algorithms for a large class of combinatorial secretary problems with interdependent values. This approach, as taken in~\Cref{sec:framework}, loses a factor of $4$ compared to secretary algorithms without interdependent values. However, as expected, it is inherently wasteful: 
Separating value interdependence and the online nature of the problem means also separating the information used to handle each one of them, restricting us to using only part of the instance for each.

\paragraph{Integrated Algorithms with Improved Competitive Ratio}
Aiming to avoid wasting a large part of the instance's information, we design algorithms that use both the signal and the valuation function of each agent, thus developing techniques that address all challenges in a unified way. 
Beyond leading to better competitive ratios, this approach also deepens our understanding of the exact restrictions posed by interdependence on the competitive ratio. 
Moreover, it is technically considerably more challenging than the mere combination of existing techniques.

Introducing value interdependence fundamentally changes the independence properties utilized by classic secretary algorithms.
We overcome these challenges by carefully leveraging the aspects of the original independence properties that ``survive" the introduction of interdependence, ultimately achieving similarly concise and intuitive proofs for our results.
These are fueled, in addition to the above, by the exact choices we make in our algorithms and analysis: while many similar versions of our techniques appear initially promising, they differ in how they handle the fundamental tension between utilizing information and maintaining analyzable (in)dependence properties. 
Identifying the ``right" balance is a key aspect of our contribution.

\subsection{Related Work}
\paragraph{Online Resource Allocation.} The work most closely related is that of~\citet{InterdepOptStopping}, who study optimal stopping problems with interdependent values. To the best of our knowledge, other work on online algorithms under the interdependent values model does not yet exist. The secretary problem, in its most fundamental formulation was studied by~\citet{Dynkin63}, who provided the optimal stopping result and $e$-approximation. There is now a long line of work studying a plethora of extensions and generalizations, including e.g. combinatorial allocations~\citep{KesselheimRTV13,HoeferK17,Reiffenhauser19}, or multi-choice secretary under cardinality or feasibility constraints~\citep{Kleinberg05,BabaioffIK07,babaioff2007knapsack}.

Another important paradigm for online resource allocation is the prophet model, where the arrival order is fixed adversarially but the values are drawn independently from known prior distributions. Starting from the seminal works of \citet{KrengelS77, KrengelS78} and~\citet{SamuelCahn84}, who provided optimal prophet inequalities for the single-choice setting, this model has also been well-studied with extensions to combinatorial settings~\citep[e.g.,][]{DuttingFKL20,FeldmanGL15CombAuctions,KleinbergW12}), variations on arrival order \citep[e.g.,][]{Yan11CorGap,PengTang22OrderSelProphet,ArsenisDK21ConstrOrderProphet} and even exploring the combined the power of both models, via the so-called prophet secretary \citep{EsfandiariHLM17ProphetSec}. A key assumption in almost all works is that the values are independent.

\paragraph{Interdependent Values.} The interdependent values model introduced by~\citet{MilgromWeber82}, building on~\citet{wilson1969communications}, paved the road for a long line of work over the past five decades to tackle the standard assumption in mechanism design and resource allocation that agents have independent private values. Given strong impossibility results in this model for obtaining optimal welfare truthfully~\citep{ausubel1999generalized,DM00,JehieM01,JehielMMZ06,ItoP06,CKK15}, recent works in computer science study approximation mechanisms for natural classes of interdependent valuations (e.g., see~\cite{RoughgardenTC16, ChawlaFK14, EdenFFG18, EdenFFGK19, EdenGZ22,gkatzelis2021prior,ChenEW,EdenFTZ21}). 
While much of the focus is on single item auctions,~\citet{EdenFFGK19} provide the first approximation guarantee for combinatorial auctions for any general monotone valuations (over items). Later works consider other combinatorial settings such as public projects~\citep{CohenFMT23}, or fair division~\citep{BirmpasELR23}.
Notably, these works address two of our three challenges, those of (1) combinatorial problems  and (2) interdependent values together in an offline setting.

\section{Model and Preliminaries}

We consider a setting with $n$ agents and $m$  heterogeneous items, where agents have interdependent values over bundles of items. Specifically, every agent $i\in [n]$ holds some private $s_i \in \mathds R_{\ge 0}$, capturing her personal information about the items, and a monotone valuation function
$$v_i:
2^{[m]}\times\mathds R_{\geq 0}^n \rightarrow \mathds R_{\geq 0},$$
where $\val{i}{X}{\sigs}$ is the value of agent $i$ for a bundle of items $X \subseteq [m]$ under a signal profile $\sigs=(s_1,\ldots,s_n)$.
{We also consider strategic settings, where the agents may misreport their signals. As is standard in the interdependent values model we assume that an agent's signal is private while the valuations function is publicly known (upon the arrival of the agent).}
For every subset $A\subseteq [n]$ we write $\sigs_A = (\ind_{1\in A}\cdot s_1, \dots, \ind_{n\in A}\cdot s_n)$, that is, we replace $s_i$ by $0$ if $i\notin A$.

It would sometimes be convenient to consider the valuation function  as a function over signals, for a fixed bundle, or as a function over bundles, for a fixed signal profile.
That is, for any fixed subset of items $X$, $v_i(X,\cdot)$ is a function over signals, and similarly, for any fixed signal profile $\sigs$, $v_i(\cdot,\sigs)$ is a function over bundles.

\subsection{Complement-Free Valuations over Signals}

We first fix a set of items $T \subseteq [m]$ and consider properties of the function $v(\sigs) = \val{}{T}{\sigs}$, which maps every signal profile to a value.

\begin{definition}[Subadditive over signals]
A valuation function $v(\cdot)$ is \emph{subadditive} over signals, if for any signal profile $\sigs$ and any $X\subseteq [n]$ we have
\[
v(\sigs)\le v(\sigs_X) + v(\sigs_{[n]\setminus X}).
\]
\end{definition}

\begin{definition}[XOS over signals]\label{def:xos-sig}
A function $v(\cdot)$ is \emph{XOS} over signals, if for any distribution $\mathcal A \in \Delta(2^{[n]})$ over subset of agents, and any signal profile $\sigs$, we have
\[
\E_{X \sim \mathcal A}[v(\sigs_X)]\ge v(\sigs) \cdot \min_{i\in [n]}\Pr[i\in X].
\]
\end{definition}

\subsection{Complement-Free Valuations over Items}

We now fix a signal profile $\sigs$ and consider properties of the function $v(X) = v(X, \sigs)$, which maps every bundle of items to a value.

\begin{definition}[Unit demand]
A valuation function $v(\cdot)$ is \emph{unit-demand} over items, if there exists a vector of weights $w\in \mathbb R_{\ge 0}^m$ such that for any bundle $X$ we have
\[
v(X) = \max_{j\in X} w_j.
\]
\end{definition}

\begin{definition}[XOS]\label{def:xos}
A valuation function $v(\cdot)$ is \emph{XOS} over items, if for set $S\subseteq [m]$ and any distribution $\mathcal B \in \Delta(2^{S})$ over bundles, we have \[
v(\mathcal B) := \E_{X\sim \mathcal B}[v(X)]\ge v(S) \cdot \min_{j\in S}\Pr_{X\sim \mathcal B}[j\in X].
\]
\end{definition}

\subsection{Approximation Ratio}

We the study online secretary setting, where at each time step $t\in [n]$ agent $i_t$ arrives and we observe their signal $s_{i_t}$. For convenience, we define $A^t := \{i_1, \dots, i_t\}$ the set of agents arrived so far by time step $t$, and we denote $v_i^t(\cdot) := v_i(\cdot,\sigs_{A^t})$ the valuation of agent $i$ using signals that have been observed at time $t$. At the end of step $t$, we irrevocably allocate some bundle $B^t$ to agent $i_t$, which prevents us from computing the optimal allocation.

Given a set of agents $A\subseteq[n]$ and items $J\subseteq[m]$, we denote by $OPT(A,\mathbf w;J)$ the value of optimal allocation of items in $J$ to agents in $A$, considering the weights $\mathbf w := \{w_i: 2^J\mapsto \mathds R_{\geq 0}\}_{i\in A}$. For convenience, we also denote $OPT_i(A,\mathbf w;J)$ and $OPT_{-i}(A,\mathbf w;J)$ to be respective contribution of agent $i$ and of the others to $OPT(A,\mathbf w;J)$.

For simplicity of notations, we may write $OPT(A,\mathbf w)$ or $OPT(A)$, dropping $J$ and (possibly) $\mathbf w$, for special cases where we consider all items $J = [m]$ and specific weights $\mathbf w = \{v_i(\cdot, \sigs_A)\}_{i\in A}$ defined by evaluating bundles using valuations and signals from $A$. Finally, we denote $OPT := OPT([n])$ the value of the overall optimal, which we seek to approximate with our online algorithms.

Given an algorithm, we denote $ALG$ the value of its allocation, which is a random variable which depends on the random arrival of agents and the (possible) internal randomness of the algorithm. We say that the algorithm is an $\alpha$-approximation with $\alpha \geq 1$ if $\E[ALG] \geq OPT/\alpha$.

\section{Secretary Algorithm when Buyers are XOS, and Subadditive over Signals.}
In this section we provide a $2e$-approximation for combinatorial allocations in the non-strategic setting, when valuation functions are XOS over items and subadditive over signals. 

\subsection{Algorithm for Subadditive over Signals}

Our algorithm is a natural extension of the online algorithm by \cite{KesselheimRTV13} for the classic (non-interdependent) setting, where our decisions at each step are made based on valuations that are only partially informed with the signals so far. This is modeled by setting all yet unknown signals to zero when evaluating the valuation functions of buyers, where subadditivity ensures that a randomly-chosen signal subset will suffice to approximate the \emph{real} valuations over all, including future, signals. In particular, the algorithm skips/samples the first $n/e$ agents, and for $t > n/e$,  on arrival of the $t^\text{th}$ agent $i_t$ she is allocated a subset of items $B_t$ that are available from those she would receive in an optimal allocation $OPT(A^t)$ for the currently known instance. Recall that $A^t$ is the set of agents arrived so far by time step $t$. 

A key insight of the analysis resides in decoupling the randomness that affects the information and the availability of the items. More precisely, to bound the contribution of the $t^\text{th}$ agent, we draw the random variables in the following order:
\begin{enumerate}
    \item We first draw (uniformly at random) the set $A^t \subseteq [n]$ containing the first $t$ agents;
    \item Then, we draw (uniformly at random) the agent $i_t \in A^t$ who arrives last within $A^t$;
    \item Finally, we draw (uniformly at random) the arrival order $\sigma_{t-1}$ of $A^{t-1} = A^t\setminus\{i_t\}$.   
\end{enumerate}

\begin{algorithm}
    Sample the first $\lfloor n/e\rfloor$ agents. Let $J^{\lfloor n/e\rfloor+1} = [m]$.
    
    On the arrival of $t^\text{th}$ agent $i_t$, for $t>\lfloor n/e\rfloor$:

\begin{itemize}
    \item Compute an optimal allocation $OPT(A^t)$ of \emph{all} items to the agents $A^t := \{i_1, \dots, i_t\}$ who have arrived, w.r.t. the valuation functions $\{v_i^t(\cdot) = v_i(\cdot,\sigs_{A^t})\}_{i\in A^t}$, denoting $\opt{i}{t}$ the bundle received by agent $i\in A^t$.
    \item From the available items, allocate $B^t = \opt{i_t}{t}\cap J^t$ to agent $i_t$.
    \item Let $J^{t+1} = J^t \setminus B^t$ denote the remaining items.
\end{itemize}
\caption{$2e$-approximation algorithm for subadditive over signals} \label{algo:2e-secretary}
\end{algorithm}

\begin{theorem}\label{thm:secretary-algo-2e}
    \Cref{algo:2e-secretary} is a $2e$-approximation, in expectation, to the offline optimal allocation with interdependent valuations that are XOS over items and subadditive over signals.
\end{theorem}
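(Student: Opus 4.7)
The plan is to adapt the analysis of \citet{KesselheimRTV13} for XOS combinatorial secretary to the interdependent setting, absorbing a multiplicative factor of $2$ from subadditivity over signals. The algorithm itself already prescribes the right decomposition of randomness: fix a step $t > \lfloor n/e\rfloor$ and draw, in order, the set $A^t$, the last arrival $i_t \in A^t$ (uniform given $A^t$), and the order of $A^{t-1}$. Conditional on $A^t$, symmetry over $i_t$ gives
\[
\E\bigl[v_{i_t}(\opt{i_t}{t},\sigs_{A^t}) \,\big|\, A^t\bigr]
\;=\;\frac{1}{t}\sum_{i\in A^t} v_i(\opt{i}{t},\sigs_{A^t})
\;=\;\frac{OPT(A^t,v^{A^t})}{t},
\]
where the reference optimum uses the \emph{partial} signal profile $\sigs_{A^t}$. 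Summing over $t$ and the remaining sources of randomness will, modulo the two ingredients below, reproduce a Kesselheim-style calculation whose telescoping contributes the factor $e$. The two missing ingredients are (i) comparing $OPT(A^t,v^{A^t})$ to the true offline optimum $OPT$, and (ii) handling item availability under XOS-over-items.

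For ingredient (i)---the source of the factor $2$---I would lower bound $OPT(A^t,v^{A^t})$ by the restriction of the true optimal allocation $O^*=(O_1^*,\dots,O_n^*)$ to $A^t$, so that
\[
OPT(A^t,v^{A^t}) \;\ge\; \sum_{i\in A^t} v_i(O_i^*,\sigs_{A^t}).
\]
Applied pointwise with $X=A^t$, subadditivity over signals gives $v_i(O_i^*,\sigs_{[n]}) \le v_i(O_i^*,\sigs_{A^t}) + v_i(O_i^*,\sigs_{[n]\setminus A^t})$. Summing over $i\in A^t$, taking the $A^t$-expectation, and absorbing the asymmetric sizes of $A^t$ and $[n]\setminus A^t$ by a size-matching coupling (monotonicity in the signal set lets us dominate the smaller expectation by the larger), I aim for
\[
\E_{A^t}\!\bigl[OPT(A^t,v^{A^t})\bigr] \;\ge\; \frac{t}{n} \cdot \frac{OPT}{2}.
\]
This is the only place subadditivity over signals enters.

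For ingredient (ii), I would execute the standard Kesselheim supporting-function argument \emph{after} conditioning on $A^t$, so that the XOS supporting additive function of $v_i(\cdot,\sigs_{A^t})$ at $\opt{i}{t}$ is a deterministic object. This gives
\[
\E\bigl[v_{i_t}(\opt{i_t}{t}\cap J^t,\sigs_{A^t}) \,\big|\, A^t,i_t\bigr]
\;\ge\; v_{i_t}(\opt{i_t}{t},\sigs_{A^t}) \cdot \min_{j\in\opt{i_t}{t}} \Pr\!\bigl[j\in J^t \,\big|\, A^t,i_t\bigr],
\]
and the inductive bound on $\Pr[j\in J^t\mid A^t,i_t]$ proceeds exactly as in the non-interdependent case, producing the usual telescoping $\prod(1-1/s)$ factors. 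Multiplying the factor $2$ from (i) by the factor $e$ from the Kesselheim recursion, and invoking monotonicity in signals one last time to pass from the value $v_{i_t}(B^t,\sigs_{A^t})$ credited in the analysis to the true value $v_{i_t}(B^t,\sigs_{[n]})$ collected by the algorithm, yields $\E[ALG]\ge OPT/(2e)$.

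The main obstacle is ingredient (i). The sets $A^t$ and $[n]\setminus A^t$ have unequal sizes $t$ and $n-t$, so subadditivity alone does not split $OPT$ evenly between them, and naive averaging fails whenever $t<n/2$. I plan to resolve this by using monotonicity in signals to couple uniform random subsets of different sizes: conditional on $i\in A^t$, the set $A^t\setminus\{i\}$ is uniform of size $t-1$ in $[n]\setminus\{i\}$ and can be sandwiched against a uniform subset matching the size of $[n]\setminus A^t$, symmetrizing the two terms on the right-hand side of subadditivity. A smaller care-point is that the XOS supporting coefficients in (ii) depend on $\sigs_{A^t}$, so the availability recursion must be executed after conditioning on $A^t$ rather than stated as a single global inequality.
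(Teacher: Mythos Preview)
Your ingredient (ii) is essentially the paper's \Cref{lem:xos_items_prob}: conditioning on $A^t$ and $i_t$ fixes the XOS supporting additive function, and the availability induction yields the factor $k/(t-1)$ exactly as in \citet{KesselheimRTV13}.

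The gap is in ingredient (i). After restricting to $O^*$, your target inequality
\[
\E_{A^t}\bigl[OPT(A^t,v^{A^t})\bigr]\ \ge\ \frac{t}{n}\cdot\frac{OPT}{2}
\]
reduces to $\sum_i \alpha_t(i)\ge\tfrac12\,OPT$, where $\alpha_t(i)=\E\bigl[v_i(O_i^*,\sigs_{A^t})\mid i\in A^t\bigr]$, and this is \emph{false} for $t<n/2$. Take a single item with $v_i(\cdot,\sigs)=\sum_{j\ne i}s_j$ and all $s_j=1$: then $OPT=n-1$ while $\sum_i\alpha_t(i)=t-1$, so the bound fails for every $t<(n+1)/2$. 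Your size-matching coupling cannot rescue this: when $t<n/2$ the complement $[n]\setminus A^t$ is the \emph{larger} set, and monotonicity in signals only lets you dominate the smaller-set expectation by the larger-set one, which is the wrong direction for symmetrizing. Since \Cref{algo:2e-secretary} samples $\lfloor n/e\rfloor$ agents, the range $t\in(n/e,n/2)$ is nonempty and the per-step factor-$2$ bound is unavailable there.

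The paper sidesteps this by never bounding $\alpha_t(i)$ pointwise. It carries the $\alpha_t(i)$'s through to the summed expression $\sum_{t>k}\alpha_t(i)/(t-1)$ and then invokes subadditivity in the form $\alpha_n(i)\le\alpha_t(i)+\alpha_{n-t}(i)$ together with monotonicity $\alpha_{t-1}(i)\le\alpha_t(i)$ to prove the harmonic-weighted inequality of \Cref{claim:alpha_sum},
\[
\sum_{t=\lceil n/e\rceil}^n \frac{\alpha_t(i)}{t-1}\ \ge\ \frac{\alpha_n(i)}{2}+\alpha_n(i)\cdot\Theta(1/n).
\]
The deficit of $\alpha_t$ for $t$ near $n/e$ is paired, via $t\leftrightarrow n-t$, with the surplus of $\alpha_{n-t}$ for $n-t$ near $n(1-1/e)$; the proof of \Cref{claim:alpha_sum} splits the sum at $n/(e-1)$ and $n-n/e$ to make this precise. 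Your overall architecture---the randomness decomposition, restriction to $O^*$, the Kesselheim availability recursion, and the final monotonicity step from $\sigs_{A^t}$ to $\sigs$---matches the paper; what is missing is replacing the per-step factor-$2$ bound by this summed argument.
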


\begin{proof}
    Let the $k$ denote the size of the sample set. For each time step $t > k$, we compute the contribution $ALG_t$ of the $t^\text{th}$ agent to value obtained by algorithm, by carefully decomposing the randomness of the arrival order as follows:
\begin{align*}
    ALG_t = \E[v_{i_t}(B^t,\sigs_{A^t})]
    = \E_{A^t}\left[\E_{i_t \in A^t}\left[\E_{\sigma_{t-1}}\left[v_{i_t}\left(B^t,\sigs_{A^t}\right)\mid 
    i_t,A^t\right]\mid A^t\right]\right].
\end{align*}

We first invoke the \emph{XOS over items} property to bound the expected value of the allocated bundle $B^t$ in terms of the value for the set $\opt{i_t}{t}$ from the optimal allocation $OPT(A^t)$ that does not take availability into consideration. First, we condition on $A^t = A \subseteq [n]$ and $i_t = i \in A$, and we denote $O = \opt{i}{t}$ to be $i$'s allocation in $OPT(A^t)$.
By~\Cref{lem:xos_items_prob}, we have that
\[    \E_{\sigma_{t-1}}\left[v_{i}\left(O\cap J^t,\sigs_{A}\right)\mid A^{t-1} = A\setminus\{i\}\right] \ge \frac{k}{t-1}\cdot v_{i}\left(O,\sigs_{A}\right).
\]

This yields,
\begin{equation}\label{eq:xos_inner_bound}
    ALG_t \ge \frac{k}{t-1}\cdot \E_{A^t}\left[\E_{i_t \in A^t}\left[v_{i_t}^t\left(\opt{i_t}{t}\right)\mid A^t\right]\right].
\end{equation}

Next we recall that $i_t$ is a uniformly random agent from the set $A^t$. 
We denote $\opt{i}{*}$ the bundle of $i$ in the overall optimal $OPT([n])$. Then, we lower-bound the optimal value of $OPT(A^t)$ with the value that agents from $A_t$ receive in the restriction of $OPT([n])$.
\begin{align}
    ALG_t &\ge \frac{k}{t-1}\cdot \frac{1}{t}\cdot\E_{A^t}\left[\sum_{i \in A^t} v_i^t(\opt{i}{t})\right] \notag \\
     &\ge \frac{k}{t-1}\cdot\frac{1}{t}\cdot \E_{A^t}\left[\sum_{i \in A^t}v_i^t(\opt{i}{*})\right] \label{eq:xos_middle_bound}
\end{align}
Note that for any agent $i$, the probability of $i\in A^t$ is $t/n$. Moreover, given $i \in A^t$, the rest of the $t-1$ agents are random. Thus we can re-write the above inequality as,
\[
ALG_t \ge \frac{k}{t-1}\cdot\frac{1}{t}\cdot\frac{t}{n}\sum_{i=1}^n \E_{A^t}[v_i^t(\opt{i}{*})\mid i\in A^t] = \frac{k}{n}\cdot\frac{1}{t-1}\cdot\sum_{i=1}^n\alpha_t(i),
\]

where for all agents $i$, we define $\alpha_t(i) = \E_{A^t}[v_i(\opt{i}{*},\sigs_{A^t})\mid i\in A^t]$, the expected value of $v_i$ for their optimal bundle on the signal $\sigs_{A\cup\{i\}}$ for a random set $A\subseteq [n] \setminus \{i\}$ of size $t-1$. 

We bound the expected value of the algorithm, by summing over the contribution of all time steps $t> k$, as follows
\begin{equation}\label{eq:ALG_sum_2e}
    \E[ALG] = \sum_{t=k+1}^n ALG_t \ge \frac{k}{n}\cdot\sum_{i=1}^n \sum_{t=k+1}^n\frac{1}{t-1}\cdot \alpha_t(i).
\end{equation}

By subadditivity over signals of the valuation functions, we note that $\alpha_n(i) \le \alpha_t(i) + \alpha_{n-t}(i)$. Moreover, since $v_i$ is monotone (over the signals) we have $\alpha_{t}(i) \ge {\alpha}_{t-1}(i)$ for all $t$.

Using similar arguments as~\cite{InterdepOptStopping} we obtain the following claim for any such $\alpha_t$'s. We include a proof in the Appendix for completeness.
\begin{claim}\label{claim:alpha_sum}
    Given any $\alpha_1 \le \ldots \le \alpha_n $ such that $\alpha_n \le \alpha_t + \alpha_{n-t}$ for all $t\in [n]$, we have
    \[
    \sum_{t=\lceil n/e\rceil}^n \frac{\alpha_t}{t-1} \ge \frac{\alpha_n}{2} + \alpha_n\cdot\Theta(1/n) 
    \]
\end{claim}
With this claim in hand we have, for all agents $i$,
\[
\sum_{t=\lceil n/e\rceil}^n \frac{1}{t-1}\cdot\alpha_t(i) \ge \frac{\alpha_n(i)}{2} + \alpha_n(i)\cdot\Theta(1/n) = \left(\frac 12 + \Theta(1/n)\right) OPT_i,
\]
where $\alpha_n(n) = v_i(\opt{i}{*},\sigs) = OPT_i$ by definition. By plugging this into~\Cref{eq:ALG_sum_2e} for $k = \lfloor n/e\rfloor$ we have,
\begin{align*}
    \E[ALG] &\ge \frac{\lfloor n/e\rfloor}{n}\sum_{i=1}^n\sum_{t=\lceil n/e\rceil}^n \frac{1}{t-1}\cdot\alpha_t(i)\\
&\ge    \frac{\lfloor n/e\rfloor}{n}\sum_{i=1}^n\left(\frac 12 + \Theta(1/n)\right) OPT_i\\
&\ge \left(\frac{1}{2e} + \Theta(1/n)\right)OPT.
\end{align*}
\end{proof}

\begin{lemma}\label{lem:xos_items_prob}
    Let $k$ be the size of the sample set. 
    Fix any subset of agents $A\subseteq[n]$ of size $\ell \ge k$, a bundle of items $B\subseteq [m]$, and an XOS valuation function $\hat v:2^{[m]}\to \reals_{\ge 0}$. We condition on $A^\ell = A$, that is $A$ is the set of agents that arrive by time step $\ell$. Let $J^{\ell+1}$ denote the random subset of remaining items after step $\ell$. The expected value of $\hat v(B\cap J^{\ell+1})$, over the random arrival order of agents in $S$, is at least $\frac{k}{\ell}\cdot \hat v(B)$. That is,
    \[
    \E_{\sigma_{\ell}}[\hat v(B\cap J^{\ell+1}) \mid  A^{\ell} = A] \ge  \frac{k}{\ell}\cdot \hat v(B)
    \]
\end{lemma}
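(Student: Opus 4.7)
The plan is to reduce the lemma to a per-item survival probability bound via Definition~\ref{def:xos}, and then to establish the latter by a short induction on the arrival-set chain. For the reduction, note that if $\Pr[j \in J^{\ell+1} \mid A^\ell = A] \ge k/\ell$ holds for every $j \in B$, then applying Definition~\ref{def:xos} to $\hat v$ with $S = B$ and $\mathcal{B}$ the conditional distribution of $B \cap J^{\ell+1}$ (which is supported on $2^B$) immediately gives
\[
\E_{\sigma_\ell}[\hat v(B \cap J^{\ell+1}) \mid A^\ell = A] \;\ge\; \hat v(B) \cdot \min_{j \in B}\Pr[j \in J^{\ell+1} \mid A^\ell = A] \;\ge\; \frac{k}{\ell}\,\hat v(B),
\]
which is exactly the desired inequality.

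To prove the per-item bound, observe that at any step $t > k$ item $j$ enters $B^t$ only if $i_t = o(A^t)$, where $o(A^t)$ denotes the (at most unique) agent in $A^t$ allocated $j$ within $OPT(A^t)$, if such an agent exists. Unrolling the algorithm then yields $j \in J^{\ell+1}$ iff $i_s \ne o(A^s)$ for every $s \in \{k+1, \ldots, \ell\}$. Define $P_t(S) := \Pr[j \in J^{t+1} \mid A^t = S]$ for $S \subseteq A$ with $|S| = t$, where the probability is taken over a uniform order of $S$. Since, conditional on $A^t = S$, the last arrival $i_t$ is uniform in $S$ while the remaining order of $S \setminus \{i_t\}$ is uniform, one obtains the recursion
\[
P_t(S) \;=\; \frac{1}{t}\sum_{i \in S,\, i \ne o(S)} P_{t-1}(S \setminus \{i\}),
\]
with the sum ranging over all of $S$ when $o(S)$ is not defined.

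A direct induction on $t$ now gives $P_t(S) \ge k/t$ for every $S$ of size $t \ge k$: the base case $P_k(S) = 1 = k/k$ is immediate from the sampling convention, and in the inductive step the sum contains either $t-1$ terms (when $o(S)$ exists), each at least $k/(t-1)$, yielding $P_t(S) \ge \tfrac{t-1}{t}\cdot\tfrac{k}{t-1} = k/t$, or $t$ such terms (when $o(S)$ does not exist), yielding $P_t(S) \ge k/(t-1) \ge k/t$. Setting $t = \ell$ delivers $P_\ell(A) \ge k/\ell$, which combined with the XOS reduction completes the proof. The main step is identifying the right recursion: once we view the survival event as factoring through the chain $A^k \subset A^{k+1} \subset \cdots \subset A^\ell$ with a uniformly-distributed last arrival at each level, the induction telescopes cleanly to $k/\ell$, matching the classical bound from~\citet{KesselheimRTV13}.
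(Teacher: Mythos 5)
Your proof is correct and follows essentially the same approach as the paper: reduce via Definition~\ref{def:xos} to a per-item survival probability bound, then establish $\Pr[j\in J^{\ell+1}\mid A^\ell = A]\ge k/\ell$ by induction on $\ell$, using that the last arrival $i_\ell$ is uniform in $A^\ell$ and only one designated agent could claim $j$ at step $\ell$. The one small refinement in your write-up is explicitly handling the case where no agent receives $j$ in $OPT(A^t)$, which turns the paper's stated equality into the inequality $\ge k/\ell$; the paper implicitly assumes $j$ is always allocated (WLOG for monotone XOS valuations), but your version is cleaner since only the inequality is needed.
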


\begin{proof}
    We will show that for any item $j$ the probability that $j\in J^{\ell}$ is $\frac{k}{\ell - 1}$ for all $\ell\ge k$. This immediately proves the lemma since $\hat v$ is an XOS function and by \Cref{def:xos} we have $\E[\hat v(B \cap J^{\ell+1}) \mid  A^\ell = A] \ge \hat v(B) \cdot \min_{j\in B} \Pr[j \in B\cap J^{\ell +1} \mid A^\ell = A ] = \hat v (B)\cdot \frac{k}{\ell}$.\\
    Fix any item $j$. We prove by induction on $\ell \ge k$ that
    \[
    \forall A \st \lvert A \rvert = \ell , \qquad \Pr_{\sigma_\ell}[j \in J^{\ell+1} \mid  A^\ell = A] = \frac{k}{\ell }.
    \]
    The base case $\ell = k$ holds trivially as all the items are still available in the sample phase. Consider a subset of agents $A$ of size $\ell$. The probability the item $j$ is still not allocated after time step $\ell$ is
    \[
    \Pr[j \in J^{\ell+1} \mid  A^{\ell} = A] = \Pr[ j \in J^\ell \text{ and } j \notin \opt{i_\ell}{\ell}\mid  A^\ell = A].
    \]
    Observe that for any available item $j$ to be allocated in step $\ell$, the agent $i^* \in A^\ell$ whose bundle contains $j$ in the allocation $OPT(A^\ell, \vals^\ell)$ has to be the last amongst $A^\ell$ to arrive (i.e., the $\ell^\text{th}$). Since the arrival order is random, the probability of $i^*$ arriving last amongst $A^\ell$ (given any $A^\ell$) is exactly $1/\ell$. Thus, we have
    \begin{align*}
        \Pr\left[j\in J^{\ell+1}\mid A^\ell = A\right] 
        &= \sum_{i\in A\setminus \{i^*\}} \frac{1}{\ell}\cdot\Pr\left[ j \in J^\ell \mid  A^{\ell - 1} = A\setminus \{i\}\right] 
    \end{align*}
By the induction hypothesis it further holds that $\Pr\left[ j \in J^\ell \mid  A^{\ell - 1} = A\setminus \{i\}\right] = k/(\ell -1)$. This simplifies the above equation to,
\[
\Pr\left[j\in J^{\ell+1}\mid A^\ell = A\right]  = \frac{1}{\ell}\cdot\frac{k}{\ell-1}\cdot(\ell-1) = \frac{k}{\ell}
\]
\end{proof}

\subsection{Improved Bounds for XOS over Signals}
 We present a modified online algorithm that obtains a $4$-approximation to the offline optimal when the agents have interdependent valuations that are XOS over signals. In particular, we modify the sample phase to $n/2$ agents instead of $n/e$.

 \begin{algorithm}
    Sample the first $\lfloor n/2\rfloor$ agents. Let $J^{\lfloor n/2\rfloor + 1} = [m]$.
    
    On the arrival of $t^\text{th}$ agent $i_t$, for $t>\lfloor n/2\rfloor$:

\begin{itemize}
    \item Compute an optimal allocation $OPT(A^t)$ of \emph{all} items to the agents $A^t := \{i_1, \dots, i_t\}$ who have arrived, w.r.t. the valuation functions $\{v_i^t(\cdot) = v_i(\cdot,\sigs_{A^t})\}_{i\in A^t}$, denoting $\opt{i}{t}$ the bundle received by agent $i\in A^t$.
    \item From the available items, allocate $B^t = \opt{i_t}{t}\cap J^t$ to agent $i_t$.
    \item Let $J^{t+1} = J^t \setminus B^t$ denote the remaining items.
\end{itemize}
\caption{$4$-approximation algorithm for XOS over signals} \label{algo:4-secretary}
\end{algorithm}

\begin{theorem}\label{thm:secretary-algo-XOSoS}
    \Cref{algo:4-secretary} is an $4$-approximation to offline optimal for online combinatorial allocation in expectation when the interdependent valuations are XOS over items and XOS over signals.
\end{theorem}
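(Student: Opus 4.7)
The plan is to adapt the proof of \Cref{thm:secretary-algo-2e}, keeping its opening argument intact and replacing the subadditivity-over-signals step (together with the technical \Cref{claim:alpha_sum}) by a single, cleaner application of XOS over signals. Since \Cref{algo:4-secretary} differs from \Cref{algo:2e-secretary} only in the sample size, \Cref{lem:xos_items_prob} still yields $\Pr[j \in J^{t+1} \mid A^t] = k/(t-1)$ for $t \ge k$, and the chain of inequalities leading up to \Cref{eq:ALG_sum_2e} carries over verbatim with $k = \lfloor n/2 \rfloor$, giving
$$\E[ALG] \;\ge\; \frac{k}{n}\sum_{i=1}^n \sum_{t=k+1}^n \frac{1}{t-1}\,\alpha_t(i),$$
where $\alpha_t(i) = \E_{A^t}[v_i(\opt{i}{*},\sigs_{A^t}) \mid i\in A^t]$.

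The new step is to bound $\alpha_t(i)$ via \Cref{def:xos-sig}. Conditioned on $i \in A^t$, the random set $A^t$ is uniform over size-$t$ subsets of $[n]$ containing $i$, so $\Pr[i \in A^t \mid i \in A^t] = 1$ and $\Pr[j \in A^t \mid i \in A^t] = (t-1)/(n-1)$ for every $j \neq i$; hence the minimum over all agents is $(t-1)/(n-1)$. Applying XOS over signals to the function $\sigs \mapsto v_i(\opt{i}{*},\sigs)$ under this distribution immediately yields
$$\alpha_t(i) \;\ge\; v_i(\opt{i}{*},\sigs)\cdot \frac{t-1}{n-1} \;=\; OPT_i \cdot \frac{t-1}{n-1}.$$

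Plugging this into the previous bound, the factors of $t-1$ cancel and leave
$$\E[ALG] \;\ge\; \frac{k}{n(n-1)} \sum_{i=1}^n OPT_i \sum_{t=k+1}^n 1 \;=\; \frac{k(n-k)}{n(n-1)}\,OPT.$$
With $k = \lfloor n/2 \rfloor$, an elementary case check on the parity of $n$ shows $k(n-k) \ge n(n-1)/4$, so $\E[ALG] \ge OPT/4$, as required. This also explains the switch from $\lfloor n/e \rfloor$ to $\lfloor n/2 \rfloor$ as sample size: after the cancellation, the summand is constant in $t$, so the optimal $k$ maximizes $k(n-k)$ rather than a quantity of order $k\log(n/k)$.

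I do not anticipate a serious obstacle. The only subtlety is verifying that \Cref{def:xos-sig} is applied to exactly the right distribution; the key observation is that, conditioned on $i \in A^t$, the distribution of $A^t$ is uniform over fixed-size sets containing $i$, producing a clean uniform marginal $(t-1)/(n-1)$ on all other agents. This is precisely the structure needed for XOS over signals to go through in one line, whereas in \Cref{thm:secretary-algo-2e} the same marginal had to be handled using the weaker subadditivity-over-signals property combined with a delicate summation argument.
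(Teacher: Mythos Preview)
Your proposal is correct and follows essentially the same approach as the paper: reuse the derivation of \Cref{thm:secretary-algo-2e} through \Cref{eq:ALG_sum_2e}, then invoke \Cref{def:xos-sig} on the uniform distribution of $A^t$ conditioned on $i\in A^t$ to obtain $\alpha_t(i)\ge \frac{t-1}{n-1}\,OPT_i$, cancel the $t-1$ factors, and conclude $\E[ALG]\ge \frac{k(n-k)}{n(n-1)}\,OPT\ge OPT/4$ for $k=\lfloor n/2\rfloor$. (Minor slip: the survival probability from \Cref{lem:xos_items_prob} should read $\Pr[j\in J^{\ell+1}\mid A^\ell]=k/\ell$, with the $k/(t-1)$ factor arising because the relevant conditioning in the proof is on $A^{t-1}$; this is harmless since you correctly defer to the paper's chain of inequalities.)
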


\begin{proof}
    Observe that the only difference between~\Cref{algo:4-secretary} and~\Cref{algo:2e-secretary} is the sample size. Therefore we can reuse much of the analysis of~\Cref{algo:2e-secretary}. In particular, we note that the proof of~\Cref{thm:secretary-algo-2e} does not use subadditivity over signals until~\Cref{eq:ALG_sum_2e}. Hence we have,
    \[
    \E[ALG] \ge \frac{k}{n}\cdot\sum_{i=1}^n\sum_{t=k+1}^n \frac{1}{t-1}\cdot\alpha_t(i), \qquad \text{(\Cref{eq:ALG_sum_2e})}
    \]
    where $\alpha_t(i) = \E_{A^t}[v_i(\opt{i}{*},\sigs_{A^t})\mid i\in A^t]$.

    Next, recall that conditioning on  $i\in A^t$, the other $t-1$ agents are a random subset of size $t-1$. Thus, we consider a random set $X$ drawn uniformly from the subsets of size $t$ which include $i$. We use the property that valuation functions are XOS over signals, as defined in \Cref{def:xos-sig}
    \[
    \alpha_t(i) = \E_{A^t}[v_i(\opt{i}{*},\sigs_{A^t})\mid i\in A^t] = \E_X[v_i(\opt{i}{*},\sigs_X)] \ge \min_{j\in[n]}\Pr[j\in X]\cdot v_i(\opt{i}{*},\sigs)
 =  \frac{t-1}{n-1}\cdot OPT_i.
    \]
    Thus by plugging into~\Cref{eq:ALG_sum_2e} for $k = \lfloor n/2\rfloor$ we have,
    \[
    \E[ALG] \ge \frac{k}{n}\sum_{i=1}^n\sum_{t=k+1}^n\frac{1}{t-1}\left(\frac{t-1}{n-1} OPT_i\right) \ge \frac{k(n-k)}{n(n-1)}OPT \geq \frac{1}{4}OPT
    \]
\end{proof}

\subsection{Framework for $4\alpha$-Competitive Algorithms}\label{sec:framework}

We provide a framework to lift up any $\alpha$-competitive algorithm for classic secretary problems (in the absence of interdependence) to an $4\alpha$-competitive algorithm under interdependent valuations that are subadditive over signals. While this takes a black-box approach to re-purpose existing secretary algorithms in the classic setting, it provides a \emph{worse competitive ratio} when applied directly to our secretary problems with interdependence ($4e$ instead of $2e$). The algorithm essentially samples $n/2$ agents to purely use their signals to estimate ``proxy values'' for other agents. Then, it runs any classic $\alpha$-approximation algorithm as a blackbox on a modified instance comprising of the remaining $n/2$ agents, using their non-interdependent proxy values. 

\begin{algorithm}
        \begin{enumerate}
    \item \textbf{First Sample Phase:}
    
    Skip the first $k_1 = n/2$ agents $\hat A = \{i_1,\ldots,i_{k_1}\}$.

For all agents $i$, define a \emph{proxy valuation function} $w_i = v_i(\sigs_{\hat A\cup\{i\}})$.

    \item \textbf{$\alpha$\texttt{-approx-Subroutine} on Modified Instance:}

    As the rest of the $n/2$ agents arrive online, run the $\alpha$-approximation online algorithm for this instance with $n/2$ agents whose values are given by the proxy valuations $\{w_i\}$.
    \end{enumerate}
   \caption{$4\alpha$-approximation for any secretary problem}
\end{algorithm}

Since the modified instance has no interdependence, we get an $\alpha$-approximation to an optimal solution of the modified instance.
By subadditivity over signals, the proxy values provide a good estimate of the true values in expectation. In fact, the expected optimal of the modified instance is a $4$-approximation to the true optimal (w.r.t to the true values of all $n$ agents). This implies that the expected value obtained by the algorithm is a $4\alpha$-approximation to the true optimal.

We note that this framework can be applied for other combinatorial secretary problems, such as multi-choice secretary under downward closed feasibility constraints (cardinality, matroid, knapsack etc.), with competitive algorithms in the classic non-interdependent setting.

We will see an instantiation of this framework formally in the next section, while also tackling the added challenge of strategic agents, to obtain a truthful $4e$-approximation.

\section{Truthful Mechanism for Bipartite Matching}
In this section we consider the case where the agents are strategic and may misreport their \emph{signals} for their own benefit.\footnote{We make the standard assumption in the interdependent values model that the valuation function will be publicly known when the agent arrives.} We provide a truthful mechanism for online weighted bipartite matching that obtains a $4e$-approximation to the offline optimal, when agents have interdependent valuations that are \emph{separable} and \emph{subadditive over signals}.

\subsection{Incentive Compatibility: Preliminaries}\label{sec:IC-prelims}
A mechanism is a tuple of an allocation rule and a payment rule, where the allocation rule determines the bundle that every agent gets, and the payment rule determines how much they pay.
Such a mechanism is said to be {\em truthful} if it incentivizes agents to report their private valuations truthfully.

For each agent $i$ let $x_i(\sigs)$ denote the (possibly random) subset of items that $i$ receives\footnote{{We note that this notation implicitly depends on the arrival order given the online nature of the mechanism. Moreover, the allocation/payment of agent $i$ may depend on the randomness of the instance and any potential randomness in the choices of the mechanism so far.}}. 
Let $p_i(\sigs) \ge 0$ denote the (expected) payment charged (wlog $p_i = 0$ if $x_i = 0$).

\begin{definition}[EPIC]
   A deterministic mechanism $(\x, \p)$ is \emph{ex-post incentive compatible (EPIC)} if truth-telling is a Nash equilibrium, that is, if for every $i\in [n], \sigs, s'_i$ we have
   \[v_i(x_i(\sigs),\sigs) - p_i(\sigs) \ge v_i(x_i(\sigs_{-i},s'_i),\sigs) - p_i(\sigs_{-i},s'_i) .\]
\end{definition}

We say that a randomized mechanism is EPIC (or \emph{universally truthful}) if it is a distribution over deterministic EPIC mechanisms. {We note that this strong notion of incentive compatibility ensures that each agent reports her signal truthfully even when she is aware of all information about others and the random choices of the mechanism.}

To provide truthful mechanism we make an additional assumption of \emph{separability} on the valuation functions. This assumption is also present in work studying truthful approximation mechanisms for combinatorial auctions in the \emph{offline} interdependent values model~\citep{EdenFFGK19}.

\paragraph{Separable valuations.} We say that a valuation function $v_i$ is separable over signals if there exists $f_i: 2^{[m]}\times \mathcal S_i \to \reals_{\ge 0}$ and $g_i: 2^{[m]}\times \mathcal{S}_{-i} \to \reals_{\ge 0}$ such that 
\[
\val{i}{X}{\sigs} = f_i(X,s_i) + g_i(X,\sigs_{-i})
\]
Although universal truthfulness and separability can be defined for general valuation functions over items, we will assume from this point onward that valuations are unit-demand (the optimal allocation will be a bipartite matching). This assumption is necessary, as no online truthful mechanisms are known to exist beyond bipartite matching even without interdependence.

\subsection{Truthful Mechanism}\label{sec:secretary-mechanism}
Our mechanism combines the random sampling generalized VCG mechanism of~\citet{EdenFFGK19} for offline combinatorial interdependent auctions and the truthful mechanism of~\citet{Reiffenhauser19} for online bipartite matching (without interdependence). In particular, we sample the first $n/2$ agents to define proxy valuations $w_i$, which only depend on $i$'s own signal and the signals of these $n/2$ sample agents. By subadditivity over signals the proxy valuations are a good estimate of the real valuations using all signals (including those arriving in the future). We now consider a modified instance of the problem with only the rest of the $n/2$ agents and their proxy valuations. Observe that the modified instance has no interdependence, so we can run the truthful $e$-approximation mechanism of~\cite{Reiffenhauser19}. We then prove that the expected optimal of the modified instance is a $1/4$ approximation to true optimal. Combining these yields a $4e$-approximation.

We start by restating the result of~\cite{Reiffenhauser19} in \Cref{thm:secretary-e-rei19} along with a description of the corresponding allocation algorithm, which we will be using as a sub-routine.

\begin{theorem}[\cite{Reiffenhauser19}]\label{thm:secretary-e-rei19}
    For any instance of classic (non-interdependent) weighted bipartite matching in the secretary setting,  \texttt{Rei19-Allocation-Subroutine} is an $e$-competitive algorithm.
\end{theorem}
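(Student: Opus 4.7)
The plan is to specialize the Kesselheim-Radke-Tönnis-Vöcking analysis template to weighted bipartite matching (the unit-demand case of XOS) and verify that it applies to the specific allocation procedure implemented by \texttt{Rei19-Allocation-Subroutine}. Since the statement concerns only the competitive ratio of the allocation rule, and not truthfulness or payments, the argument reduces to a direct application of the standard online-matching analysis; no monotonicity or incentive argument is needed at this step.

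I would bound the expected contribution of the $t$-th arriving agent for each $t > k := \lfloor n/e \rfloor$ by the same three-stage decoupling of the arrival-order randomness used in the proof of~\Cref{thm:secretary-algo-2e}: first draw $A^t$ uniformly from size-$t$ subsets of $[n]$, then draw $i_t$ uniformly from $A^t$, then reveal the order of $A^t\setminus\{i_t\}$. Conditional on $A^t$, the weighted optimal matching $OPT(A^t)$ is deterministic; averaging over the uniform choice of $i_t \in A^t$, the expected weight of $i_t$'s matched edge in $OPT(A^t)$ equals $OPT(A^t)/t$. Averaging over the uniform $A^t$, a standard sampling argument yields $\E[OPT(A^t)] \ge (t/n)\cdot OPT$, since the restriction of the global optimum matching to the agents in $A^t$ is feasible on $A^t$ with that expected weight. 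Finally, by~\Cref{lem:xos_items_prob} specialized to the unit-demand, non-interdependent setting, the item $i_t$ would be matched to in $OPT(A^t)$ is still available at step $t$ with probability exactly $k/(t-1)$.

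Multiplying these three factors, the expected contribution at step $t$ is at least $\frac{k}{n(t-1)}\cdot OPT$. Summing from $t=k+1$ to $n$ gives $\E[ALG] \ge \frac{k}{n}\cdot OPT \cdot \sum_{t=k+1}^n \frac{1}{t-1}$, which tends to $OPT/e$ as $n\to\infty$ for $k = \lfloor n/e\rfloor$. The principal technical ingredient is the availability-probability argument, which~\Cref{lem:xos_items_prob} already proves in a strictly more general form; the main obstacle is therefore minimal, being essentially a specialization of existing machinery. The fact that this approach yields an $e$-competitive algorithm is precisely the content of the Kesselheim-Radke-Tönnis-Vöcking result for XOS restricted to matching, and \texttt{Rei19-Allocation-Subroutine} is designed so that its allocation rule coincides (up to symmetry-breaking choices irrelevant for competitive ratio) with that algorithm, which is why the theorem is stated as a restatement of~\citet{Reiffenhauser19}.
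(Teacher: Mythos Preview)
The paper does not prove this theorem; it is stated as a restatement of a result from \citet{Reiffenhauser19} and used as a black box. So there is no ``paper's own proof'' to compare against. Your proposal, however, contains a genuine gap.

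Your argument is the KRTV analysis, but \texttt{Rei19-Allocation-Subroutine} is \emph{not} the KRTV allocation rule. In the KRTV-style algorithm (\Cref{algo:2e-secretary} here), one computes $OPT(A^t)$ over \emph{all} items and then gives $i_t$ whatever part of her bundle is still available. In \texttt{Rei19-Allocation-Subroutine}, one computes $OPT(A^t,\mathbf w;J^t)$ over the \emph{currently available} items $J^t$ and gives $i_t$ her item in that restricted optimum. These rules do not coincide, and the difference is not a symmetry-breaking technicality: restricting to available items is precisely what enables truthfulness in \citet{Reiffenhauser19}, and it changes which item $i_t$ is tentatively assigned.

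This breaks your use of \Cref{lem:xos_items_prob}. That lemma's induction relies on the fact that the ``target'' item for $i_t$ in $OPT(A^t)$ is determined by $A^t$ alone, independently of the arrival order of $A^{t-1}$; the only remaining randomness is whether that fixed target survives. Under the Rei19 rule, the optimum is taken over $J^t$, which depends on the entire history, so the target item is itself a function of the arrival order of $A^{t-1}$, and the decoupling in step~3 of your plan fails. Moreover, even if $i_t$'s KRTV-target $j^*$ happens to be available, $OPT(A^t,\mathbf w;J^t)$ may match $i_t$ to a different (possibly lower-weight for $i_t$) item to maximize total welfare on $J^t$, so you cannot simply lower-bound $i_t$'s contribution by $w_{i_t}(j^*)\cdot\Pr[j^*\in J^t]$. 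The actual analysis in \citet{Reiffenhauser19} handles this via a different, more delicate argument; your sketch as written proves $e$-competitiveness of the KRTV rule, not of \texttt{Rei19-Allocation-Subroutine}.
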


\begin{description}
    \item[\texttt{Rei19-allocation-subroutine}~\citep{Reiffenhauser19}:] Given $n$ agents arriving online with valuations $\{w_i\}_{i=1}^n$.
    \item Sample $k = \lfloor n/e\rfloor$ agents. Let $J^{k+1} = [m]$.

    \item For $t > k$, on the arrival of the $t^\text{th}$ agent $i_t$ with valuation function $w_{i^t}$:
   \begin{itemize}
       \item Compute an optimal allocation $OPT(A^t,\mathbf w;J^{t})$ of the \emph{available} items $J^t$ to the agents in $A^t$ w.r.t.\ the valuation functions $\mathbf w =\{w_i\}_{i\in  A^t}$. 
       \item Allocate $B^t = \{j_t\}$ to $i_t$ if they receive  $j^t$ in $OPT( A^t, \mathbf w;J^{t})$, and $B^t = \emptyset$ otherwise.
       \item Let $J^{t+1} = J^t\setminus B^t$
   \end{itemize}
\end{description}

The following is a useful lemma from~\cite{EdenFFGK19} that bounds the expected value of the modified instance on a random subset of agents. We include a proof in the Appendix for completeness.

\begin{lemma}[\cite{EdenFFGK19}]\label{lem:4-approx-random-sampling}
Given a combinatorial allocation problem with $n$ agents who have interdependent valuations $v_i$ that are subadditive-over signals. For a uniformly random subset $\hat A\subseteq [n] $ of size $n/2$ we have,
\[
\E_{\hat A}\left[OPT([n]\setminus \hat A, \mathbf w, [m])\right] \ge \frac{1}{4}\cdot OPT,
\]
where $OPT([n]\setminus \hat A,\mathbf w, [m])$ is the optimal welfare from allocating to only agents in $[n]\setminus \hat A$ w.r.t to proxy valuation functions $w_i(X,s_i) = v_i(X,\sigs_{\hat A})$.
\end{lemma}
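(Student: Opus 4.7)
The plan is to exhibit a concrete feasible allocation for the restricted instance and then use subadditivity, symmetry, and monotonicity over signals to lower-bound its expected welfare. Since the bundles $\{\opt{i}{*}\}_{i\in[n]}$ in the overall optimum are pairwise disjoint, their restriction to $i \in [n]\setminus\hat A$ still forms a feasible allocation of items in $[m]$. So for every realization of $\hat A$,
\[
OPT([n]\setminus\hat A, \mathbf w, [m]) \;\ge\; \sum_{i \notin \hat A} w_i(\opt{i}{*}) \;=\; \sum_{i \notin \hat A} v_i\bigl(\opt{i}{*},\, \sigs_{\hat A \cup \{i\}}\bigr).
\]
By linearity it suffices to prove the per-agent estimate
\[
\E_{\hat A}\bigl[\ind_{i \notin \hat A}\cdot v_i(\opt{i}{*}, \sigs_{\hat A \cup \{i\}})\bigr] \;\ge\; \tfrac{1}{4}\, v_i(\opt{i}{*}, \sigs)
\]
for every agent $i$, and then sum over $i \in [n]$.

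I would extract the $1/4$ as two independent factors of $1/2$. For the first, subadditivity over signals applied with $X = \hat A$ to the full profile yields $v_i(\opt{i}{*}, \sigs) \le v_i(\opt{i}{*}, \sigs_{\hat A}) + v_i(\opt{i}{*}, \sigs_{[n]\setminus\hat A})$. Taking expectations and using the fact that $\hat A$ and $[n]\setminus\hat A$ have the same marginal distribution (both uniform over subsets of size $n/2$) collapses the right-hand side to $2\,\E[v_i(\opt{i}{*}, \sigs_{\hat A})]$. Monotonicity in signals then gives $\E[v_i(\opt{i}{*}, \sigs_{\hat A \cup \{i\}})] \ge \E[v_i(\opt{i}{*}, \sigs_{\hat A})] \ge v_i(\opt{i}{*}, \sigs)/2$.

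The step I expect to be most delicate is extracting the indicator $\ind_{i \notin \hat A}$ without losing more than another factor of $2$: a priori, conditioning on $\{i \notin \hat A\}$ could \emph{decrease} the expected proxy value and spoil the bound. I would resolve this via a coupling argument. Conditional on $i \notin \hat A$, the set $\hat A \cup \{i\}$ has size $n/2 + 1$ and is uniform over subsets of $[n]$ of that size containing $i$; conditional on $i \in \hat A$, it equals $\hat A$ and is uniform of size $n/2$ containing $i$. A standard coupling realizes the former distribution as a superset of the latter, and monotonicity in signals gives
\[
\E[v_i(\opt{i}{*}, \sigs_{\hat A \cup \{i\}}) \mid i \notin \hat A] \;\ge\; \E[v_i(\opt{i}{*}, \sigs_{\hat A \cup \{i\}}) \mid i \in \hat A].
\]
Consequently the conditional expectation on $\{i \notin \hat A\}$ is at least the unconditional one, and since $\Pr[i \notin \hat A] = 1/2$ we obtain
\[
\E\bigl[\ind_{i \notin \hat A}\cdot v_i(\opt{i}{*}, \sigs_{\hat A \cup \{i\}})\bigr] \;=\; \tfrac{1}{2}\,\E[v_i(\opt{i}{*}, \sigs_{\hat A \cup \{i\}}) \mid i \notin \hat A] \;\ge\; \tfrac{1}{2}\,\E[v_i(\opt{i}{*}, \sigs_{\hat A \cup \{i\}})] \;\ge\; \tfrac{1}{4}\,v_i(\opt{i}{*}, \sigs),
\]
which summed over $i$ completes the proof.
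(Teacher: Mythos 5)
Your proof is correct, and it shares the paper's two main ingredients: the feasibility argument (restricting the global optimum to $[n]\setminus\hat A$) and the subadditivity-over-signals split into $\sigs_{\hat A}$ and $\sigs_{[n]\setminus\hat A}$. Where you diverge is in how the per-agent factor of $1/4$ is assembled. The paper works \emph{conditionally} on $i\notin\hat A$ throughout: after conditioning, $\hat A$ and $[n]\setminus(\hat A\cup\{i\})$ are exchangeable (both uniform $\lfloor n/2\rfloor$-subsets of $[n]\setminus\{i\}$, with $n$ odd wlog), so the two subadditivity terms are immediately symmetric and one application of monotonicity finishes the per-agent bound in one stroke. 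You instead work \emph{unconditionally} — using the symmetry $\hat A \leftrightarrow [n]\setminus\hat A$ (which needs $n$ even, the opposite parity normalization, also fine with a dummy agent) — and then add a separate coupling/stochastic-dominance step to show that conditioning on $i\notin\hat A$ can only increase the expected proxy value. That coupling step is correct, and it is a clean way to defuse the worry you flag, but it is an extra ingredient the paper's conditional pairing renders unnecessary; the two routes are otherwise of the same depth and both yield the same $1/4$ per agent.
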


\begin{algorithm}
    \begin{enumerate}
    \item \textbf{First Sample Phase:}
    
    Skip the first $k_1 = \lfloor n/2\rfloor$ agents $\hat A = \{i_1,\ldots,i_{k_1}\}$.

For all agents $i$, define a \emph{proxy valuation function} 
$w_i(\cdot) := v_i(\cdot, \sigs_{\hat A\cup\{i\}})$

    \item \textbf{\texttt{Rei19-Allocation-Subroutine}~\citep{Reiffenhauser19} on Modified Instance:}
    
    Sample the next $k_2 = \lfloor n/2e\rfloor$ agents. Let $J^{k_1+k_2+1} = [m]$.

    For $t > k_1 + k_2$, on the arrival of the $t^\text{th}$ agents $i_t$:
   \begin{itemize}
        \item Elicit signal $s_{i_t}$ and use it to compute the proxy valuation function $w_{i^t}$ of agent $i_t$.
       \item Compute an optimal allocation $OPT(\hat A^t,\mathbf w;J^{t})$ of the \emph{available} items $J^t$ to the agents in $\hat A^t := \{i_{k_1+1},\ldots,i_{t}\}$, w.r.t to the proxy valuation functions $\mathbf w =\{w_i\}_{i\in \hat A^t}$. 
       \item Allocate $B^t = \{j_t\}$ to $i_t$ if they receive  $j^t$ in $OPT( A^t, \mathbf w;J^{t})$, and $B^t = \emptyset$ otherwise.
       \item Let $J^{t+1} = J^t\setminus B^t$
   \end{itemize}
   \item \textbf{Payment Rule:}
   
   At the end, for all $t> k_1+k_2$, charge $i_t$ price $p_t(\sigs) =  OPT(\hat A^{t-1}, \mathbf w ;J^t) -  OPT_{-i^t}(\hat A^t,\mathbf w; J^{t}) + g_{i^t}(B^t,\sigs_{[n]-i^t}) - g_{i^t}(B^t, \sigs_{\hat A})$. 
    \end{enumerate}
   \caption{Truthful $4e$-approximation for weighted bipartite matching}
   \label{alg:secretary-mechanism}
\end{algorithm}

We now show that~\Cref{alg:secretary-mechanism} is a truthful mechanism.

\begin{lemma}\label{lem:truthful}
    \Cref{alg:secretary-mechanism} is a universally truthful mechanism. 
\end{lemma}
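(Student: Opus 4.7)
The plan is to reduce universal truthfulness to a standard VCG argument applied to the \emph{proxy} valuations $w_i = v_i(\cdot, \sigs_{\hat A \cup \{i\}})$. I would fix an arbitrary realization of the random arrival order so the mechanism becomes deterministic, and argue truthfulness in that realization; since this will hold for every realization, universal truthfulness will follow. Agents $i_t$ with $t \le k_1 + k_2$ lie in one of the two sample phases: they are allocated nothing and charged nothing, so their utility is identically zero and truthfulness is vacuous. The substance of the argument lies in the online-phase agents $i_t$ with $t > k_1 + k_2$.

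For such $i_t$, I would first observe that the proxy valuations $w_j$ of every other agent $j \neq i_t$ depend only on $\sigs_{\hat A}$ and $s_j$, hence are independent of $s_{i_t}$, and that the allocations $B^s$ with $s<t$ have been committed before $i_t$ arrives. So both $\hat A^{t-1}$ and $J^t$ are unaffected by $i_t$'s report, and the only quantities that can change are $w_{i_t}$, the chosen bundle $B^t$, and its payment.

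The central computation would be to plug separability $v_{i_t}(B^t, \sigs) = f_{i_t}(B^t, s_{i_t}) + g_{i_t}(B^t, \sigs_{-i_t})$ into the utility and exploit the fact that $\sigs_{-i_t}$ and $\sigs_{[n]-i_t}$ agree on the coordinates $g_{i_t}$ actually reads (those of agents other than $i_t$), so the two $g$-terms $g_{i_t}(B^t,\sigs_{-i_t})$ and $g_{i_t}(B^t,\sigs_{[n]-i_t})$ cancel. What remains combines with $f_{i_t}(B^t, s_{i_t})$ and the surviving $g_{i_t}(B^t, \sigs_{\hat A})$ to reassemble $w_{i_t}(B^t)$, giving
\[
v_{i_t}(B^t, \sigs) - p_t(\sigs) \;=\; w_{i_t}(B^t) \;-\; OPT(\hat A^{t-1}, \mathbf w; J^t) \;+\; OPT_{-i_t}(\hat A^t, \mathbf w; J^t),
\]
which is precisely the textbook VCG utility in a local welfare-maximizing assignment on $(\hat A^t, J^t)$ with proxy valuations $\mathbf w$.

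The remainder would be the classical VCG argument: under a misreport $s'_i$, only $w_{i_t}$ is replaced by $w'_{i_t}$, the first $OPT$-term is unaffected since $i_t \notin \hat A^{t-1}$, and the pair consisting of $B'^t$ for $i_t$ together with the $\mathbf w'$-optimal assignment for the remaining agents is a feasible allocation of $J^t$ to $\hat A^t$, whose value under the true proxies is therefore at most $OPT(\hat A^t, \mathbf w; J^t)$; this yields $w_{i_t}(B'^t) + OPT_{-i_t}(\hat A^t, \mathbf w'; J^t) \le OPT(\hat A^t, \mathbf w; J^t)$, so truthful reporting weakly dominates. The main obstacle I anticipate is the algebraic cancellation in the third paragraph, together with the conceptual point it conveys: it is \emph{separability} that lets the interdependent correction terms be packaged so that an agent's true-value utility collapses to a VCG utility over $w_{i_t}$, a valuation the agent can only affect through her own signal.
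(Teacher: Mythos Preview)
Your proposal is correct and follows essentially the same route as the paper's proof: fix the arrival order, dispose of the sampled agents trivially, and for each online agent use separability to collapse the true-value utility into a VCG-style utility with respect to the proxy valuations $w_i$, after which the standard VCG optimality argument finishes. The paper carries out the computation directly for a misreport $s_i'$ (adding and subtracting $f_{i_t}(B^t,s_{i_t})$ to rewrite the payment so that the $g$-terms merge with $v_{i_t}$), while you first derive the utility expression under truth and then argue about deviations; but the algebra and the key role of separability are the same.
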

\begin{proof}
    Fix any arrival order, for simplicity we rename agents so that agent $i$ is the $i^\text{th}$ agent to arrive. Note that all sample agents, i.e, $\hat A = [k_1]$ and $\hat A^{k_1 + k_2} =\{k_1+1 ,\ldots, k_1 + k_2\}$ do not receive any allocation and are charged no payments. More precisely, they have no incentive to misreport as their utility will be $0$ regardless. 
    
    Fix any $i > k_1 + k_2$ with true signal $s_i$, given that all $[n]\setminus i$ agents report $\sigs_{-i}$ truthfully we show that it is utility maximizing for $i$ to report $s_i$ truthfully. 
    For reported signal $s'_i$, denote $w_i'(\cdot) = v_i(\cdot,\sigs_{\hat A},s'_i)$ the corresponding proxy valuation function, and $\mathbf w_{-i} = \{ w_t\}_{t\in \hat A^{i-1}}$ the proxy valuation of agents strictly before $i$ (not part of $\hat A$), which does not depend on agent $i$.
    
    We compute $B^i(s_i')$ the bundle received by $i$ in allocation $OPT(\hat A^i, \mathbf w_{-i}, w_i';J^i)$. Using the separability of the valuation function, the price paid by agent $i$ is then
    \begin{align*}
    \quad p_i(\sigs_{-i},s'_i) &= OPT(\hat A^{i-1}, \mathbf w_{-i};J^i) - OPT_{-i}(\hat A^i,\mathbf w_{-i}, w_i';J^i) &&+ g_i(B^i(s'_i), \sigs_{[n]-i}) - g_i(B^i(s'_i), \sigs_{\hat A})\\
    &&&+ f_i(B^i(s'_i), s_i) - f_i(B^i(s'_i), s_i)\\
    &=OPT(\hat A^{i-1}, \mathbf w_{-i};J^i) - OPT_{-i}(\hat A^i,\mathbf w_{-i}, w_i';J^i) &&+ v_i(B^i(s'_i), \sigs) - v_i(B^i(s'_i), \sigs_{\hat A\cup\{i\}})
    \end{align*}
    With this we are ready to compute $i$'s utility $u_i(\sigs;s'_i)$ for true signals $\sigs$ and signal $s'_i$
    \begin{align*}
       u_i(\sigs;s'_i) &=  v_i(B^i(s'_i),\sigs)  - p_i(\sigs_{-i},s'_i)\\
       &= \underbrace{v_i(B^i(s'_i),\sigs_{\hat A \cup \{i\})}) + OPT_{-i}(\hat A^i,\mathbf w_{-i}, w_i';J^i)}_{\text{Proxy welfare of step } i \text{ for } s'_i} - \underbrace{OPT(\hat A^{i-1}, \mathbf w_{-i};J^i)}_{\text{Does not depend on } i},
    \end{align*}
    where the first set of terms denotes the  proxy welfare obtained by $\hat A^i$ (on true signals) for the allocation $OPT(\hat A^i,\mathbf w_{-i}, w_i';J^i)$. 
    By the definition, this will be at most the proxy welfare for the allocation computed using $i$'s true signal,
    \[
    v_i(B^i(s'_i),\sigs_{\hat A \cup \{i\})}) + OPT_{-i}(\hat A^i,\mathbf w_{-i}, w_i';J^i) \le OPT(\hat A^i, \mathbf w;J^i).
    \]
    Hence we have,
    \[
     u_i(\sigs;s'_i) \le  OPT(\hat A, \mathbf w_{-i} , w_i';J^i) - {OPT(\hat A^{i-1}, \mathbf w_{-i};J^i)}
    \]
   Observe that the right hand side is precisely $i$'s utility $u_i(\sigs;s_i)$ for reporting the true signal $s_i$. Thus proving that the the mechanism is truthful.
\end{proof}

We are now ready to prove the main result of this section, namely that~\Cref{alg:secretary-mechanism} is a $4e$-approximation to the optimal bipartitie matching.

\begin{theorem}\label{thm:secretary-truthful}
    \Cref{alg:secretary-mechanism} is a truthful mechanism that obtains a $4e$-approximation to the optimal offline welfare for online bipartite matching with interdependent valuations that are separable and subadditive over signals.
\end{theorem}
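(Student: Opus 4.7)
The plan is to combine the three ingredients already assembled in this section. Truthfulness is immediate from Lemma~\ref{lem:truthful}, so the task reduces to showing that the expected true welfare of the allocation produced by Algorithm~\ref{alg:secretary-mechanism} is at least $OPT/(4e)$. To this end, I would view the second phase of the mechanism as a black-box invocation of \texttt{Rei19-Allocation-Subroutine} on a \emph{non-interdependent} modified instance, and then chain together the $e$-guarantee of Theorem~\ref{thm:secretary-e-rei19} with the $4$-approximation of Lemma~\ref{lem:4-approx-random-sampling}.

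Concretely, conditional on the random set $\hat A$ of the first $k_1 = \lfloor n/2 \rfloor$ arriving agents, each remaining agent $i \in [n] \setminus \hat A$ carries a \emph{fixed} proxy valuation $w_i(\cdot) = v_i(\cdot, \sigs_{\hat A \cup \{i\}})$ that is unit-demand over items and does not depend on signals of agents outside $\hat A \cup \{i\}$. Under this re-interpretation, step 2 of the mechanism is literally an execution of \texttt{Rei19-Allocation-Subroutine} on a classic weighted bipartite matching secretary instance over the $n/2$ agents in $[n] \setminus \hat A$: the additional $k_2 = \lfloor n/2e \rfloor$ skipped arrivals correspond exactly to the sampling phase of that subroutine (since $\lfloor (n/2)/e \rfloor = k_2$), and the induced arrival order on $[n] \setminus \hat A$ is uniformly random. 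Applying Theorem~\ref{thm:secretary-e-rei19} conditional on $\hat A$ then yields expected proxy welfare at least $\frac{1}{e} \cdot OPT([n] \setminus \hat A, \mathbf w, [m])$.

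Taking the outer expectation over $\hat A$ and invoking Lemma~\ref{lem:4-approx-random-sampling} bounds the expected proxy optimum of the modified instance from below by $OPT/4$, giving a combined lower bound of $OPT/(4e)$ on the expected proxy welfare of the algorithm. To pass from proxy welfare back to true welfare, I would invoke monotonicity of each $v_i$ in the signal profile: for every allocated bundle $B^t$ and winner $i_t$, we have $v_{i_t}(B^t, \sigs) \geq v_{i_t}(B^t, \sigs_{\hat A \cup \{i_t\}}) = w_{i_t}(B^t)$, so the true welfare dominates the proxy welfare pointwise, and therefore also in expectation. This completes the $4e$-approximation.

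The main obstacle I anticipate is threading the randomness carefully: two layers of uniform sampling ($\hat A$ itself, and the order within $[n] \setminus \hat A$) are being composed with the internal randomness of \texttt{Rei19-Allocation-Subroutine}, and one must verify that the algorithm's allocation decisions coincide with an honest run of that subroutine on the proxy instance. A secondary subtlety is that Lemma~\ref{lem:4-approx-random-sampling} is stated with the slightly weaker proxy $v_i(\cdot, \sigs_{\hat A})$ that zeros out agent $i$'s own signal, while the mechanism uses $v_i(\cdot, \sigs_{\hat A \cup \{i\}})$; monotonicity over signals makes this mismatch go in the favorable direction, so the lemma's bound transfers to the algorithm's proxy values with no loss.
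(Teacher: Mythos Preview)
Your proposal is correct and mirrors the paper's own proof almost exactly: invoke Lemma~\ref{lem:truthful} for truthfulness, condition on $\hat A$ to reduce to a non-interdependent secretary matching instance, apply Theorem~\ref{thm:secretary-e-rei19} for the $e$-factor, then Lemma~\ref{lem:4-approx-random-sampling} for the additional factor~$4$, and finally use monotonicity in signals to pass from proxy welfare to true welfare. The two subtleties you flag (composing the two layers of random sampling, and the $\sigs_{\hat A}$ versus $\sigs_{\hat A\cup\{i\}}$ discrepancy in the proxy definition) are handled in the paper in the same way you suggest, so there is nothing missing.
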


\begin{proof} 
    By \Cref{lem:truthful} we have that our mechanism is truthful for any realization of arrival order. We now show that the expected value of the algorithm is a $4e$-approximation.
    
    Fix any sample $\hat A$ of the first $\lfloor n/2\rfloor$ agents. Conditioned on $\hat A$, we consider the modified instance on $[n] \setminus \hat A$ with valuation functions $w_i$ as defined in \Cref{alg:secretary-mechanism}. We note that this modified instance does not have any interdependence; in particular, the $\lfloor n/2\rfloor$ agents in $[n]\setminus \hat A$ arrive in uniform random order and each agent $i$ arrives with a valuation function $w_i$ that only depends on agent $i$'s signal $s_i$ (along with the signals of the sample $\hat A$ which are not part of the modified instance). This allows us to run the \texttt{Rei19-Allocation-Subroutine}~\citep{Reiffenhauser19} for online bipartite matching in the classic secretary setting (without interdependence). By~\Cref{thm:secretary-e-rei19} this provides an $e$-approximation to the optimal offline allocation---restricted to the modified instance, which implies
    \begin{equation}
        \E\left[\sum_{i\in [n]\setminus \hat A} w_i(B(i))\;\middle|\;\hat A \right] \ge \frac{1}{e}\cdot OPT([n]\setminus\hat A, \mathbf w, [m]) \label{eq:alg-independent-e}
    \end{equation}
    
By subadditivity over the signals, the expected value of $OPT([n]\setminus\hat A, \mathbf w, [m])$ over the randomness of $\hat A$ is at least $OPT/4$, following the result of~\cite{EdenFFGK19}. We cast this in~\Cref{lem:4-approx-random-sampling} and include a proof in the Appendix for completeness.

Putting these together we get the required bound as follows
\begin{align*}
    \E[ALG] = \E\left[\sum_{i=1}^n v_i(B(i),\sigs)\right] & \ge \E_{\hat A}\left[\E\left[\sum_{i=1}^n w_i(B(i)) \;\middle|\;\hat A\right]\right] & \text{(By monotonicity)} \\
    &\ge \E_{\hat A}\left[\frac{1}{e}\cdot OPT([n]\setminus\hat A, \mathbf w, [m])\right] & \text{(By~\Cref{eq:alg-independent-e})} \\
    &\ge \frac{1}{4e}\cdot OPT & \text{(By~\Cref{lem:4-approx-random-sampling})}
\end{align*}
\end{proof}




\appendix
\section{Missing Proofs}
\begin{proof}[Proof of \Cref{claim:alpha_sum}] 
We consider three different parts of the sum, (i) from $t_0 = \lceil n/e\rceil$ to $t_1 = \lceil n/(e-1)\rceil$, (ii) from $t_1$ to $t_2 = n-\lceil n/e\rceil + 1$, and (iii) from $t_2$ to $n$. Since $\alpha_t \ge \alpha_{t-1}$ for all $t$, we have
\[
\sum_{t=\lceil n/e\rceil} \frac{\alpha_t}{t-1} \geq 
\alpha_{t_0} \sum_{t=t_0}^{t_1} \frac{1}{t-1} +  
\alpha_{t_1} \sum_{t=t_1+1}^{t_2-1} \frac{1}{t-1} +  
\alpha_{t_2} \sum_{t=t_2}^{n} \frac{1}{t-1}.
\]

We now simplify the sum of $1/(t-1)$ and replace $t_0, t_1, t_2$ with their respective values,
\begin{align*}
\sum_{t=\lceil n/e\rceil} \frac{\alpha_t}{t-1}
&\geq 
\alpha_{t_0} \ln\left(\frac{t_1}{t_0-1}\right) +  
\alpha_{t_1} \ln\left(\frac{t_2-1}{t_1}\right) +  
\alpha_{t_2} \ln\left(\frac{n}{t_2-1}\right)\\
&\geq 
\alpha_{t_0} \ln\left(\frac{e}{e-1}\right) +  
\alpha_{t_1} \ln\left(\frac{(e-1)^2}{e}\right) + 
\alpha_{t_2} \ln\left(\frac{e}{e-1}\right) + \alpha_{t_1}\Theta(1/n)\\
&\geq (\alpha_{t_0} + \alpha_{t_2}) \ln\left(\frac{e}{e-1}\right) + 2\alpha_{t_1} \ln\left(\frac{e-1}{\sqrt{e}}\right) + \alpha_{t_1}\Theta(1/n).
\end{align*}
Finally, observe that $t_2 \ge n - t_0$ and $t_1 \ge n - t_1$. Since $\alpha_n \le \alpha_t + \alpha_{n-t}$ for all $t$ and $\alpha_{t} \ge \alpha_{t'}$ for $t \ge t'$, we bound $\alpha_{t_0} + \alpha_{t_2}$ and $2\alpha_{t_1}$ by $\alpha_n$ to get
\begin{align*}
\sum_{t=\lceil n/e\rceil} \frac{\alpha_t}{t-1}
&\geq \alpha_{n}\ln\left(\frac{e}{e-1}\right) +\alpha_{n}\ln\left(\frac{e-1}{\sqrt{e}}\right) + \alpha_{n}\Theta(1/n)\\
&= \frac{\alpha_n}{2} + \alpha_{n}\Theta(1/n)
\end{align*}
\end{proof}

\begin{proof}[Proof of \Cref{lem:4-approx-random-sampling}]
For any realization of $\hat A$, we define $\hat O_i$ to be $i$'s bundle in the allocation $OPT([n]\setminus \hat A,\mathbf w,[m])$ if $i\in[n]\setminus A$ and $\emptyset$ otherwise.
   \begin{align*}
       \E_{\hat A}\left[OPT([n]\setminus \hat A, \mathbf w,[m])\right] &= \sum_{i=1}^n \E_{\hat A} \left[v_i(\hat O_i, \sigs_{\hat A \cup  \{i\}})\cdot \ind\{i \notin \hat A\}\right] \\
       &\ge \sum_{i=1}^n \E_{\hat A} \left[v_i(\opt{i}{*}, \sigs_{\hat A \cup  \{i\}})\cdot \ind\{i \notin \hat A\}\right] \\
       &= \sum_{i=1}^n \Pr[i\in \hat A] \cdot\E_{\hat A} \left[v_i(\opt{i}{*}, \sigs_{\hat A \cup  \{i\}})\mid{i \notin \hat A}\right]\\
       &\ge \frac{1}{2}\sum_{i=1}^n \frac{1}{2}\cdot v_i(\opt{i}{*}, \sigs), 
   \end{align*}
   where the last inequality follows because $i$ has $1/2$ probability of being chosen in $\hat A$, and by subadditivity over signals we have
   \[
   v_i(\opt{i}{*}, \sigs) \le v_i(\opt{i}{*}, \sigs_{\hat A }) + v_i(\opt{i}{*}, \sigs_{[n] \setminus \hat A}),
   \]
   with probability of $\hat A = A$ and probability of $\hat A = [n]\setminus (A\cup\{i\})$ are equal for any $A\subseteq [n]\setminus\{i\}$ of size $\lfloor n/2 \rfloor$.\footnote{We assume wlog that $n$ is odd by adding a dummy agent with value zero.}
\end{proof}

\end{document}